\documentclass[12pt]{article}

\usepackage[utf8]{inputenc}
\usepackage[T1]{fontenc}
\usepackage[left=2.5cm,right=2.5cm,top=2.5cm,bottom=3cm]{geometry}
\usepackage{amsmath, amssymb, amsthm, amsfonts} 
\usepackage{bbm} 

\usepackage{graphicx}
\usepackage{float}
\usepackage{makeidx}
\usepackage{hyperref}
\usepackage{verbatim}
\usepackage[font=small,labelfont=bf]{caption}
\usepackage{subfigure}
\usepackage{enumerate}
\usepackage{cancel}
\usepackage{mathtools}
\usepackage{MnSymbol}
\usepackage[skip=10pt plus1pt, indent=0pt]{parskip}

\usepackage{tikz}
\usepackage{array}
\usetikzlibrary{patterns.meta, calc, patterns, math, intersections, scopes, shapes, arrows, automata}
\usepackage{pgfplots}
\usepackage{tcolorbox}
\newtcolorbox{mybox}{colback=red!5!white,colframe=red!75!black}

\newlength{\altezzautile}
\newlength{\larghezzautile}
\newtheorem{theorem}{Theorem}
\newtheorem{remark}{Remark}

\makeindex

 \let\b=\beta

       \let\l=\lambda
  \let\om=\omega     
 \let\s=\sigma  
 \let\x=\xi 
   \let\L=\Lambda 
\let\O=\Omega   \let\Si=\Sigma 
\let\S=\Sigma

\def\GI{\mathbb{G}}
\def\EE{\mathbb{E}}
\def\Z{\mathbb{Z}}

\def\Z2{\mathbb{Z}^2}

\def\PP{\mathcal{P}}

\def\0{\emptyset}

\def\CC{\mathcal{C}}

\def\1{\mathbbm{1}} 
\def\<{\langle}
\def\>{\rangle}
\def\be{\begin{equation}}
\def\ee{\end{equation}}
\def\PP{{\mathbb P}}

\def\XX{{\rm X}}
\def\YY{{\rm Y}}

\begin{document}

\title {The BEG model at the FAD triple point on the square lattice}
\author{Estevão F. Borel$^{\dag}$,   Aldo Procacci$^\dag$, Rémy Sanchis$^\dag$ and Benedetto Scoppola$^*$ \\\\
\footnotesize{$^\dag$ Dep. Matem\'atica-ICEx, UFMG, CP 702 Belo Horizonte - MG, 30161-970 Brazil}\\
\footnotesize{$^*$Dipartimento di Matematica - Universita Tor Vergata di Roma, 00133 Roma, Italy}\\
}
\date{\today}
\maketitle
\begin{abstract}
In this note, we prove that the two-dimensional Blume-Emery-Griffiths model at the triple point Ferromagnetic-Antiquadrupolar-Disordered (FAD) has a unique Gibbs measure at any temperature, thereby establishing the absence of phase transitions. The FAD point lies at the intersection of lines separating three regions of the phase diagram, and it is a singular point where the model exhibits infinitely many ground states. Our proof is based on a random-cluster type representation with configuration-dependent cluster weights and a coupling with Bernoulli site percolation with parameter $1/2$.
\end{abstract}

\subsection*{Introduction}
The Blume-Emery-Griffiths model \cite{beg71} defines a class of spin models in the $d$-dimensional unit cubic lattice $\mathbb{Z}^d$ depending on two real parameters $X$ and $Y$ with formal Hamiltonian
\begin{eqnarray}\label{defbeg}
 H (\sigma)=-\sum_{x\sim y} (\sigma_x\sigma_y+\YY\sigma_x^2\sigma_y^2+\XX(\sigma_x^2+\sigma_y^2)),
\end{eqnarray}
where by the notation  $x\sim y$ we mean that $\{x,y\}$ is an unordered pair of nearest neighbors in $\mathbb{Z}^d$, and $\sigma_x\in\{-1,0,+1\}$. It was  introduced in 1971 with the initial motivation to investigate
phase separation and
superfluidity in mixtures of liquid $^3$He and $^4$He and has since been used in many other applications (see e.g. \cite{LMPS} and references there). The parameter space $X-Y$ is divided into three regions (according to the ground states). Namely,  the disordered region $D = \{(\XX, \YY) : 1 + 2\XX + \YY < 0 \mbox{ and }  \XX < 0\}$ where the only ground state is $\s_x=0$ everywhere, the ferromagnetic region $F = \{(\XX, \YY) : 1 + 2\XX + \YY > 0 \mbox{ and }   1 + \XX + \YY > 0\}$ with two ground states, $\s_x=+1$ everywhere and $\s_x=-1$ everywhere, and the antiquadrupolar region $A = \{(\XX, \YY) : 1 + \XX + \YY < 0 \mbox{ and }  \XX > 0\}$ where the model exhibits an infinite  number of ground states  organized  into two equivalent classes (the first class is  such that $\sigma_x=0$ for all  $x$ in the even sublattice  and
$\sigma_x\neq 0$ for all  $x$ in the odd sublattice, and the second class is the opposite). These  three regions are separated by three lines, namely the line
$DF=\{(\XX,\YY)\,:\, 1+2\XX+\YY=0 \mbox{ and } \XX<0\}$ (disordered-ferromagnetic line), the line
$AF =\{(\XX,\YY)\,: \, \XX+\YY+1=0 \mbox{ and }  \XX>0\}$ (antiquadrupolar-ferromagnetic line), and
$AD =\{(0,\YY)\,:\, \YY<-1\}$  (antiquadrupolar-disordered  line), which meet at the ferromagnetic-antiquadrupolar-disordered ($FAD$)  point $\{(0,-1)\}$.

In the regions $F$ and $D$ and at the interface $DF$, the low-temperature regime  of the model can be described by the usual Pirogov-Sinai theory \cite{si82,sv17} while at any point of the $X-Y$ plane  the high temperature regime can be  described  by standard methods such as polymer expansion \cite{llp} and Dobrushin uniqueness criterion \cite{pcl21}.  In the region $A$  the extension of the Pirogov-Sinai theory given in  \cite{gs88} can be used to describe the low temperature regime of the model.

Things become considerably more delicate along the AF and AD lines and at the FAD point, where the model exhibits an infinite number of ground states that cannot be organized into a finite number of equivalence classes. In particular, the residual entropy of the BEG model at the FAD point is higher than at any other point along the AD and AF lines (see \cite{bl08,ln11}).
As a consequence of these features, despite the enormous amount of work and the large number of results available in the literature — most of which have been obtained using mean-field theory, Monte Carlo simulations, and renormalization group techniques — rigorous results for these interfaces remain scarce. 

In this article we consider the  BEG model on the two-dimensional unit square lattice at the  $FAD$ point.
At zero temperature, this model coincides with the  Widom-Rowlinson model \cite{GL71} (with activity $\l=1$) for which uniqueness of the Gibbs measure has been established  by  Higuchi
for a range of activities which includes the value   $\l=1$ \cite{Hi83}. An alternative proof of the uniqueness of the (uniform) Gibbs measure
of this model at zero temperature has recently been given in \cite{LMPS} via a coupling of the Gibbs
measure with + boundary condition  with the Bernoulli site percolation process at parameter 1/2. To our knowledge, no rigorous analysis has been available in the literature for the BEG model at the FAD point at non-zero temperatures, so that the possible presence of a phase transition at some positive $\beta$ cannot, in principle, be ruled out. We stress that rigorous proofs of the existence of phase transitions along the AD line and (possibly) at the FAD point are available only for sufficiently large dimensions $d$ (see \cite{PS} and references therein).

The main result of this note is to prove that the
mean value of the spin at the origin with $+$ or $-$ boundary conditions vanishes in the thermodynamic limit at any non-zero temperature.
Since the BEG model at the FAD point falls into the class of models considered by Lebowitz and Monroe \cite{LM}, for which FKG inequalities hold (see Theorem 1 in \cite{LM}), the equality of local magnetizations with $  +  $ and $  -  $ boundary conditions in the thermodynamic limit implies the independence of any correlation function from the boundary conditions — and thus the uniqueness of the Gibbs state at all temperatures. The (standard) proof of this implication for the BEG model at the FAD point can be carried out along the same lines as for the ferromagnetic Ising spin-1/2 model (see, e.g., \cite{sv17}).

Our result is obtained via a coupling between a random cluster model, encoding the dependence on the temperature, and the zero-temperature BEG model at FAD. We were inspired by a similar coupling described in \cite{D-C} and we also made use of results at the zero-temperature case obtained in \cite{LMPS} (in particular Lemma 1 and its consequence there).

\section{The Model and Results}
\label{sec:model}

Let $L$ be an integer and let  $\L$ be a square in $\Z2$ of size $2L+1$ centered at the origin $o$ of $\Z2$. We denote by $\partial_i\L$ the internal boundary of $\L$, by  $\partial_e\L$ the external boundary of $\L$ and we set $\bar\L=\L\cup\partial_e\L$.
The notation $\lim_{\L\uparrow \infty}$  (i.e. the thermodynamic limit) means  here $\lim_{L\to\infty}$.
At each site $x\in \L$ we place a spin $\s_x$ taking values in the set $\{-1, 0, +1\}$ and
we denote by $\Si_\L= \{-1, 0, +1\}^\L$ the set of all possible spin configurations in $\L$. A boundary condition for the model is determined by a fixed configuration $\x\in \Si_{\mathbb{Z}^2}$ with the understanding that in the thermodynamic limit, as $\L$ invades $\mathbb{Z}^2$, the  sites of $\x$ entering $\L$ are disregarded and those in  $\mathbb{Z}^2\setminus\L$ are kept. At the FAD point the Hamiltonian of the BEG model in the volume $\L$ with boundary condition $\x$ is

\begin{equation}
\label{eq:hamiltonian}
H_\Lambda^\xi(\sigma) = -\sum_{\substack{\{x,y\}\subset \L \\ |x-y|=1}} \sigma_x\sigma_y(1-\sigma_x\sigma_y)
- \sum_{x\in \partial_{i}\Lambda} \sum_{\substack{y\in \partial_e\L \\ |x-y|=1}} (\x_y \sigma_x - \x_y^2\sigma_x^2).
\end{equation}
In particular ``$+$'' is the configuration such that $\x_x=+1$ for all $x\in \Z2$, and
``$-$'' is the configuration such that $\x_x=-1$ for all $x\in \Z2$.

The Gibbs measure $\mu_{\L,\b}^\xi(\cdot)$  of the BEG model in the volume $\L$ at the FAD point at inverse temperature $\b$ with $\xi$ boundary conditions, is given by
\[
\mu_{\L,\b}^\xi(\cdot) = \frac{\sum_{\s\in \Si_\L}(\cdot)e^{-\b H_\L^\xi(\s)}}{Z_\L^\xi(\b)}
\]
where
\[
Z_\L^\xi(\b)=\sum_{\s\in \Si_\L}e^{-\b H_\L^\xi(\s)}.
\]
is the partition function. The mean value of the spin at any fixed site $x\in \L$  under the probability measure $\mu_{\L,\b}^\xi(\cdot)$ is given by
\[
\langle\sigma_x\rangle_{\L,\beta}^\xi= \frac{\sum_{\s\in \Si_\L}\s_x e^{-\b H_\L^\xi(\s)}}{Z_\L^\xi(\b)}.
\]
The thermodynamic limit of $\<\s_x\>_{\L,\b}^\xi$, if it exists,  will be denoted hereafter by $\langle\sigma_x\rangle^\xi_\b$, i.e.
\begin{equation}
\langle\sigma_x\rangle^\xi_\b=\lim_{\Lambda\uparrow\infty}\langle\sigma_x\rangle_{\Lambda,\b}^\xi.
\end{equation}

We can now state our main theorem.\\

\begin{theorem}\label{t1}
For the BEG model on $\Z2$ at the  FAD point and at any inverse temperature $\b\in[0,+\infty]$  we have that
\be\label{th1}
\langle\sigma_x\rangle^+_\b=\langle\sigma_x\rangle^-_\b=0.
\ee
\end{theorem}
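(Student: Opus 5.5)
The idea is to prove that $\langle\sigma_o\rangle^+_{\L,\b}$ is bounded by the probability that $o$ is joined to $\partial_e\L$ by a path of $+$ spins, and then to bound that probability by Bernoulli site percolation at parameter $1/2$, which is subcritical on $\Z2$. The case of $-$ boundary conditions then follows from the global spin flip $\sigma\mapsto-\sigma$, which maps $\mu^+_{\L,\b}$ to $\mu^-_{\L,\b}$.

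\textbf{Step 1: reading off the weights.} At the FAD point the bulk factor $e^{\b\sigma_x\sigma_y(1-\sigma_x\sigma_y)}$ equals $1$ unless $\sigma_x\sigma_y=-1$, in which case it equals $e^{-2\b}$. The boundary term with $\x\equiv+$ behaves in the same way: a site $x\in\partial_i\L$ pays a factor $e^{-2\b}$ for each neighbour in $\partial_e\L$ exactly when $\sigma_x=-1$. So the model is a soft Widom--Rowlinson model. The set $S=\{x:\sigma_x\neq0\}$ is free, and each nearest-neighbour pair of opposite nonzero spins costs $e^{-2\b}$.

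\textbf{Step 2: a random-cluster (Edwards--Sokal) representation.} Set $p=1-e^{-2\b}$. For every edge $\{x,y\}$ with both endpoints in $S$ (or $x\in S$ and $y\in\partial_e\L$), write
\[
e^{-2\b\1_{\{\sigma_x\sigma_y=-1\}}}=(1-p)+p\1_{\{\sigma_x=\sigma_y\}},
\]
and expand the product. This gives a joint measure on triples $(\sigma,S,\omega)$, where $\omega$ is a set of open edges inside $S\cup\partial_e\L$ and signs are constant on $\omega$-clusters. The $\omega$-clusters that touch $\partial_e\L$ are forced to be $+$. Conditionally on $(S,\omega)$, every other cluster receives a uniform independent sign. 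Hence
\[
\langle\sigma_o\rangle^+_{\L,\b}=\PP_{\L,\b}\bigl(o\stackrel{\omega}{\longleftrightarrow}\partial_e\L\bigr)\ \ge0 .
\]
An $\omega$-path from $o$ to the boundary consists of sites that all carry spin $+1$. Therefore
\[
0\le\langle\sigma_o\rangle^+_{\L,\b}\le\mu^+_{\L,\b}\bigl(\text{there is a $+$ path from } o \text{ to } \partial_i\L\bigr).
\]
The same bound holds at $\b=0$ and at $\b=\infty$, where $p=0$ and $p=1$ respectively.

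\textbf{Step 3: stochastic domination by site percolation with parameter $1/2$.} This is the key step, and I expect it to be the main obstacle only in the bookkeeping. Fix any configuration outside a site $x$, and let $n_\pm$ be the number of neighbours of $x$ with spin $\pm1$, counting boundary sites. The conditional weights for $\sigma_x=0,+1,-1$ are $1$, $e^{-2\b n_-}$ and $e^{-2\b n_+}$. Hence
\[
\mu^+_{\L,\b}(\sigma_x=+1\mid\cdot)=\frac{e^{-2\b n_-}}{1+e^{-2\b n_-}+e^{-2\b n_+}}\le\frac12 ,
\]
because the numerator is at most $1$, so it is at most the rest of the denominator. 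This uniform single-site bound lets me construct a sequential coupling. Reveal the sites in a fixed order and use independent uniforms, so that every site with $\sigma_x=+1$ is also open in a Bernoulli$(1/2)$ site configuration. In other words, the set of $+$ spins is dominated by Bernoulli site percolation with parameter $1/2$, which is the analogue of Lemma 1 of \cite{LMPS} at positive temperature.

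\textbf{Step 4: conclusion.} Combining Steps 2 and 3 gives
\[
0\le\langle\sigma_o\rangle^+_{\L,\b}\le P_{1/2}\bigl(o\leftrightarrow\partial_i\L\bigr).
\]
Since $p_c^{\rm site}(\Z2)>1/2$ (a standard fact, via Higuchi/van den Berg-type arguments or Menshikov's theorem), the right-hand side tends to $0$ as $L\to\infty$. For a general site $x$ the same argument applies once $\L$ contains $x$ deep inside it. Hence $\langle\sigma_x\rangle^+_\b=0$, and by the spin-flip symmetry also $\langle\sigma_x\rangle^-_\b=0$, for every $\b\in[0,\infty]$.

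The only delicate point is to check that the Edwards--Sokal expansion handles the boundary edges correctly. With $+$ boundary conditions the only constraint is that boundary-touching $\omega$-clusters are $+$, so the symmetry argument in Step 2 goes through.
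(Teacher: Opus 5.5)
Your proof is correct, but it takes a genuinely different route from the paper.

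\textbf{The paper's route.} The paper puts a bond variable on every edge of $E$, with weight $p\mathbbm{1}_{\{\sigma_e\neq-1\}}$ when open and $1-p$ when closed. Conditionally on $\omega$, the spins then follow the \emph{zero-temperature} model, that is, the uniform measure on configurations compatible with $\omega$ on the random graph $\mathbb{G}_\omega$. A cluster-flip argument gives $\langle\sigma_x\rangle^+_{\Lambda,\beta}=\mathbb{E}_{\phi^+_p}\bigl[\bar\mu^+_\omega(x \xleftrightarrow[\omega]{+} \partial\Lambda)\bigr]$. The authors then import from \cite{LMPS} the zero-temperature domination by Bernoulli$(1/2)$ site percolation on $\mathbb{G}_\omega$, which is proved there by a Markov-chain coupling. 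They finish with monotonicity of site percolation in the underlying graph.

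\textbf{Your route.} You open bonds only between equal nonzero spins, which is an Edwards--Sokal coupling on the support $S$. This gives the exact identity $\langle\sigma_o\rangle^+_{\Lambda,\beta}=\mathbb{P}(o\stackrel{\omega}{\leftrightarrow}\partial_e\Lambda)$, with nonnegativity for free. You then forget $\omega$ and bound by the $+$-path event under $\mu^+_{\Lambda,\beta}$ itself. Finally you prove the domination directly at finite $\beta$ from the estimate $e^{-2\beta n_-}/(1+e^{-2\beta n_-}+e^{-2\beta n_+})\le 1/2$.

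\textbf{What each buys.} Your argument is more self-contained. It needs no external zero-temperature lemma and no averaging over random subgraphs, and it settles the $-$ case explicitly by the global flip. It also gives a stronger by-product: the $+$ set of $\mu^+_{\Lambda,\beta}$ is dominated by Bernoulli$(1/2)$ for every $\beta$. The paper's representation instead isolates all temperature dependence in $\phi^+_p$ and reduces every question to the zero-temperature model. That is the framework the authors propose for higher dimensions.

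\textbf{One step to state explicitly.} The sequential coupling needs the bound conditional on the previously revealed sites only, namely $\mu^+_{\Lambda,\beta}(\sigma_{x_k}=+1\mid\sigma_{x_1},\dots,\sigma_{x_{k-1}})\le 1/2$. This follows from your bound on the full conditional by the tower property.
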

As noted in the introduction, the validity of the FKG inequalities for this model, combined with Theorem 1, implies uniqueness of the infinite-volume Gibbs measure and hence rules out the existence of a phase transition.

\section{Proof of Theorem \ref{t1}}

In order to prove Theorem \ref{t1}, we develop a graphical representation of the BEG model at the FAD point. Our construction generalizes the standard Random Cluster representation (FK) used for the Potts model (as discussed,  for instance, in  \cite{D-C}). We begin by constructing this representation in Section \ref{sec:RCR}, then use it in Section \ref{sec:MagCon} to relate the magnetization to a connectivity event, and we conclude the proof of Theorem \ref{t1} in Section \ref{sec:Conc}.

\subsection{Random Cluster Representation}\label{sec:RCR}
As usual we look at $\L$ ($\bar\L$) as a vertex set of a graph $\GI_\L$ ($\GI_{\bar\L}$) whose edges are the neighbor pairs.
So a  set $U\subset\L$ is connected if the induced graph $\GI|_U$ is connected. If $U\subset\L$ is connected, we denote by $\EE_U$
 the set of all edges of $\GI|_U$ and we denote by $\CC_\L$ the set of all connected sets in $\L$.
Now, given a configuration $\s\in \Si_\L$ let $\L^+(\s)=\{x\in \L: \s_x=+1\}$ and let
$\L^\pm(\sigma)=\{x\in \L: \s_x\neq 0\}$.
Let
$C_o^+(\s)$ be the connected component of $\L^+(\s)$ containing the origin $o$ ($C_o^+(\s)$ is empty if $\s_o\neq +1$)
and let $C_o^\pm(\s)$ be the connected component of $\L^\pm(\s)$ containing the origin $o$ ($C_o^\pm(\s)$ is empty if $\s_o=0$). From now on we will set for brevity $\EE_{\L}=E^\circ$  and $\EE_{\bar\L}= E$.

Let $e=\{x,y\}$. If both $x,y\in \L$, set $\sigma_e=\sigma_x\sigma_y$; if $x\in \L$ and $y\in \partial_e\L$, set $\s_e=\s_x\x_y$. Therefore, we can write the Hamiltonian, for any configuration $\s\in \S_\L$ and any fixed boundary condition $\x$, as
\[
H^\x_\L(\sigma)=\sum_{e\in E}[(\sigma_e)^2-\sigma_e]=2\sum_{e\in E}\mathbbm{1}_{\{\sigma_e=-1\}}.
\]
Note that, using the identity
$$
e^{-2\beta\mathbbm{1}_{\s_e=-1}}=e^{-2\beta}+(1-e^{-2\beta})\mathbbm{1}_{\{\s_e\neq-1\}},
$$
and setting $p=(1-e^{-2\b})$,  the partition function $Z_\L^\x(\b)$  can be rewritten as:
\begin{equation}
\begin{split}
   Z_\L^\x(\b) & =\sum_{\sigma\in\Sigma_\L}\prod_{\{x,y\}\in E}[(1-p)+p\mathbbm{1}_{\{\sigma_e\neq -1\}}].
\end{split}
\end{equation}
We introduce a graphical representation by considering the space of bond configurations $\Omega_{E}=\{0,1\}^{E}$.
Given $\om\in \O_E$,
let
$E_\omega=\{e\in E:~\omega_e=1\}$ and $\bar\L_\om=\{x\in \bar\L: x\in e\;\mbox{for some $e\in E_\om$}\}$. We set
$\GI_\om=(\bar\L,E_\om)$. We also set $\partial\L_e^\om=\{x\in \partial\L_e: ~\mbox{$x\in e$ for some $e\in E_\omega$}\}$.

For a fixed boundary condition $\x$, we then define a coupling on the product space $\Sigma_\L\times \Omega_{E}$ via a compatibility relation. We say a spin configuration $\sigma\in \S_\L$ and a bond configuration $\omega\in \Omega_{E}$ are compatible, and write $\s\overset{_\x}{\sim}\om$, if for every edge $e=\{x,y\}\in E$:
\[
\omega_e=1\implies \sigma_e\neq -1.
\]
We then define the joint probability measure $P^\x_p$ on $\Sigma_\L\times \Omega_{E}$ as:
\begin{equation}
     P^\x_p(\sigma,\omega):=\frac{1}{\Xi^\x_{\L}(p)}\prod_{e\in E}[(1-p)
    \delta_{\omega_e,0}+p\delta_{\omega_e,1}\mathbbm{1}_{\{\sigma_e\neq-1\}}].
\end{equation}
Where $\Xi^\x_{\L}(p)$ is the natural normalizing constant. However, notice that
\begin{equation}
\begin{split}
   \Xi^\x_{\L}(p)& =\sum_{\sigma\in\Sigma_\L}\sum_{\omega\in \Omega_{E}}
   \prod_{e\in E}[(1-p)\delta_{\omega_e,0}+p\delta_{\omega_e,1}\mathbbm{1}_{\{\sigma_e\neq-1\}}]\\
    & =\sum_{\sigma\in\Sigma_\L}\prod_{e\in E}\Big[
    \sum_{\omega_e\in\{0,1\}}[(1-p)\delta_{\omega_e,0}+p\delta_{\omega_e,1}\mathbbm{1}_{\{\sigma_e\neq-1\}}]\Big]\\
    & = \sum_{\sigma\in\Sigma_\L}\prod_{e\in E}[(1-p)+p\mathbbm{1}_{\{\sigma_e\neq -1\}}]\\& = Z_\L^\x(\b).
\end{split}
\end{equation}
\begin{remark}
Note that if $(\s,\omega)\in \Sigma_\L\times \Omega_{E}$ is not compatible then $P_p^\x(\sigma,\omega)=0$. Indeed, if  $(\s,\omega)$ is not compatible, then there  is at least  an edge $e\in E$ such that $\omega_e=1$ and $\s_e=-1$. But for such an edge
$$
(1-p)
    \delta_{\omega_e,0}+p\delta_{\omega_e,1}\mathbbm{1}_{\{\sigma_e\neq-1\}}=0.
$$
\end{remark}
Summing $ P^\x_p(\sigma,\omega)$ over $\sigma$ yields the marginal distribution on $\Omega_{E}$, denoted by $\phi^\xi_p$. To calculate $\phi^\xi_p(\omega)$, observe that
$$
\prod_{e\in E}[(1-p)
    \delta_{\omega_e,0}+p\delta_{\omega_e,1}\mathbbm{1}_{\{\sigma_e\neq-1\}}]=(1-p)^{|E|-|\omega|}p^{|\omega|}\prod_{e\in E_\om}\mathbbm{1}_{\{\sigma_e\neq-1\}},
$$
where $|\omega|$ denotes the number of open edges of the configuration $\omega\in\Omega_{E}$ (i.e. those edges $e\in E$ such that $\omega_e=1$). Hence
\begin{equation}
    \phi^\x_p(\omega)=\sum_{\sigma\in \Sigma_\L}P^\x_p(\sigma,\omega)=\frac{1}{ Z^\x_{\L}(\b)}p^{|\omega|}(1-p)^{|E|-|\omega|}W^\x(\omega),
\end{equation}
where
\[
W^\x(\omega)=\sum_{\sigma \in \Sigma_\L} \prod_{e\in E_\om}\mathbbm{1}_{\{\sigma_e \neq -1\}}=
\sum_{\sigma\in \Sigma_\L}\mathbbm{1}_{\{\sigma\overset{_\x}{\sim} \omega\}}.
\]
To understand what the factor  $W^\x(\omega)$ represents, let us introduce some notation. Given a configuration
$\omega\in \Omega_{E}$ and a boundary condition $\x$ on $\partial_e\L$, let $\x^\om$ be defined as follows.
\begin{equation}
\x^\om_x=\begin{cases}\x_x & \mbox{if $x\in \partial\L_e^\om$}\\0 & {\rm otherwise}
\end{cases}.
\end{equation}
Let now $E^\circ_\omega\subset E_\om$ be such that $E^\circ_\omega=\{e\in E^\circ:~\omega_e=1\}$ and let $\GI^\circ_\omega=(\L,E^\circ_\omega)$ (recall that $E^\circ$ is the set of neighbor pairs in $\L$ while $E$ is the set of neighbor pairs in $\L\cup\partial_e\L$).
Then
$$
W^\x(\omega)=Z^{\x}_{\GI^\circ_\omega}(\beta=+\infty)
$$
where $Z^{\x}_{G^\circ_\omega}(\beta=+\infty)$ is the partition function  of the zero-temperature BEG model at
FAD on the graph $\GI^\circ_\omega$ with $\x^\om$ boundary conditions.

Now, given $\omega\in \Omega_E$ we define the (conditional) measure $P^\xi(\s\mid\om)$ on $\S_\L$ in a natural way as follows.
\begin{equation}
P^\xi(\s\mid\omega)= \frac{P^\x_p(\sigma,\omega)}{\phi^\x_p(\omega)}=\frac{\prod_{e\in E_\om}\mathbbm{1}_{\{\sigma_e\neq-1\}}}{W^\x(\omega)} = \frac{\mathbbm{1}_{\{\sigma\overset{_\x}{\sim} \omega\}}}{W^\x(\omega)}.
\end{equation}

Finally, since
\be\nonumber
\begin{split}
\sum_{\om\in \O_{E}}P^\x_p(\sigma,\omega) & =\frac{1}{Z^\x_{\L}(\b)}\sum_{\om\in \O_{E}}\prod_{e\in E}[(1-p)
    \delta_{\omega_e,0}+p\delta_{\omega_e,1}\mathbbm{1}_{\{\sigma_e\neq-1\}}]\\
    & =\frac{1}{Z^\x_{\L}(\b)}\prod_{e\in E}[(1-p)
    +p\mathbbm{1}_{\{\sigma_e\neq-1\}}] = \mu^\x_{\L,\b}(\sigma),
\end{split}
\ee
 for any function $g:\Sigma_\L\to \mathbb{R}$ we have the decomposition:
\begin{equation}\label{eq:entropy_decomposition}
\begin{split}
    \langle g \rangle^\x_{\L,\b}
     & =\sum_{\s\in \S_\L} g(\s)\mu^\x_{\L,\b}(\sigma)\\
    & =\sum_{\om\in \O_{E}} \sum_{\s\in \S_\L} g(\sigma)P^\x_p(\sigma,\omega)\\
    & =\sum_{\om\in \O_{E}} \sum_{\s\in \S_\L} g(\sigma)P^\xi(\sigma\mid\omega)\phi^\x_p(\omega)\\
    & =\sum_{\om\in \O_{E}}\phi^\x_p(\omega) \sum_{\s\in \S_\L} g(\sigma)\Bigg\{\frac{\mathbbm{1}_{\{\sigma\sim\omega\}}}{W^\x(\omega)}\Bigg\}.
    \end{split}
\end{equation}
I.e. in the end we get
\begin{equation}\label{mean}
 \langle g \rangle^\x_{\L,\b} =\EE_{\phi^\x_p}[ \<g(\s)\>_{\GI_\om,\b=+\infty}^\xi].
\end{equation}

Identity (\ref{mean}) shows clearly the utility of this coupling which allows us to separate the  thermal and geometric effects. In the r.h.s. of (\ref{mean}), the measure $\phi^\xi_p$ encodes the temperature dependence via the parameter $p=(1-e^{-2\b})$,  while, for fixed $\om\in \O_{E}$, $\<g(\s)\>_{\GI_\om,\b=+\infty}^\xi$ is the mean value of $g(\s)$ for the zero-temperature BEG model at FAD on the graph $\GI_\om$ with $\x^\omega$ boundary conditions.

\subsection*{Magnetization-Connectivity Identity}\label{sec:MagCon}

We now apply identity \eqref{mean} to express the magnetization at a given site in terms of a connectivity property under the zero-temperature measure. To this end let us introduce some notation. For a fixed bond configuration $\omega\in\Omega_E$ with boundary condition $\xi$, let $\bar{\mu}^\x_{\omega}$ denote the uniform measure  on the set of spin configurations $\sigma\in\S_\L$ compatible with $\omega$. In other words,  $\bar{\mu}^\x_{\omega}$ is the Gibbs measure
 of the zero-temperature BEG model at FAD on the graph $\GI_\om$ with $\x^\om$ boundary conditions.
Moreover, for a given $x\in \L$,
let $C_x(\omega)\subset \L$ denote the vertex set of the connected component of the graph $\GI_\om$ containing  $x$.  Then we have that $\O_{E}=\O_{E}^s\uplus\O^c_{E}$ where $\uplus$ denotes the disjoint union and $\O_{E}^s=\{\om\in \O_{E}: ~ C_x(\om)\cap\partial_e \L=\0\}$ and $\O_{E}^c=\{\om\in \O_{E}: ~ C_x(\om)\cap\partial_e \L\neq \0\}$ for each $x\in\L$.

Let us fix a configuration $\omega$ and let us choose the boundary condition $+$ and a vertex $x\in \L$. Then we define the positive connectivity event $$\{x \xleftrightarrow[\omega]{+} \partial \L\}$$ (event in $\Sigma_\L$) as the set of spin configurations such that there is a path $p$ in the graph $\GI_\om$ with vertex set $V_p\subset \bar\L$ connecting  the vertex $x$ to some vertex of the boundary $\partial_e \L$ with $\sigma_y=+1$ for all $y\in V_p$.\\

\begin{remark}
Note that when $\om\in \O_{E}^s $ (i.e when $C_x(\om)$ does not touch the boundary) then, by definition $\bar{\mu}^+_\omega(x \xleftrightarrow[\omega]{+} \partial\L)=0$. This is despite the fact that  for some configurations $\s\in \S_\L$ compatible with $\om$ there may well be a path starting at $x$  and ending at a vertex $y\in \partial_e\L$ all made by $+$ spins.   However, since $ \om\in \O_{E}^s$,  such a path is not a path in $\GI_\om$ and thus
such  configurations do not belong to the set
$\{x \xleftrightarrow[\omega]{+} \partial \L\}$. In other words, the set $\{x \xleftrightarrow[\omega]{+} \partial \L\}$
is empty when $ \om\in \O_{E}^s $.\\
\end{remark}

\begin{theorem}[Magnetization-Connectivity Identity]\label{thm:mag_con}
For the BEG model at the FAD point, the finite-temperature magnetization at site $x$ under positive boundary conditions satisfies:
    \begin{equation}
        \langle \sigma_x\rangle_{\L,\beta}^+=\mathbb{E}_{\phi^+_p}\Big[\bar{\mu}^+_\omega(x \xleftrightarrow[\omega]{+} \partial\L)\Big].
    \end{equation}
\end{theorem}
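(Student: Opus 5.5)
By identity (\ref{mean}) applied with $g(\s)=\s_x$ and boundary condition $+$, we have $\langle\s_x\rangle^+_{\L,\b}=\EE_{\phi^+_p}\big[\bar\mu^+_\om(\s_x)\big]$. So it suffices to prove, for every fixed $\om\in\O_E$, the zero-temperature identity
\[
\bar\mu^+_\om(\s_x)=\bar\mu^+_\om\big(x \xleftrightarrow[\omega]{+} \partial\L\big).
\]
The natural tool is a spin-flip symmetry of the uniform measure on configurations compatible with $\om$. Compatibility only forbids $\s_e=-1$ on open edges. This means no open edge joins a $+$ to a $-$ inside $\L$, and no open edge joins an interior $-$ to a boundary $+$.

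I would split $\S_\L$ into the event $A=\{x \xleftrightarrow[\omega]{+} \partial\L\}$ and its complement. On $A$ we have $\s_x=+1$, so $\bar\mu^+_\om(\s_x\1_A)=\bar\mu^+_\om(A)$. It remains to show $\bar\mu^+_\om(\s_x\1_{A^c})=0$. This also covers the case $\om\in\O^s_E$, where $A=\0$ by the preceding Remark.

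To do this, fix $\s\in A^c$ compatible with $\om$. Let $D(\s)$ be the set of vertices $y\in\L$ that are connected to $x$ by a path in $\GI_\om$ whose vertices all lie in $\L$ and all have $\s_y\neq0$; set $D=\0$ if $\s_x=0$. Define $T\s$ by flipping $\s_y\mapsto-\s_y$ for $y\in D(\s)$ and leaving all other spins unchanged. We then check four things.
\begin{enumerate}
\item \emph{$T\s$ is compatible with $\om$.} Open edges inside $D$ keep their product. Open edges from $D$ to $\L\setminus D$ end at a zero spin, by maximality of $D$, so their product stays $0$.
\item \emph{Open edges from $D$ to $\partial_e\L$ do not exist.} Such an edge would join a vertex of $D$ to a boundary $+$. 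By compatibility that vertex has spin $+1$, and then by compatibility all of $D$ is $+1$, since open edges inside $D$ join nonzero spins with nonnegative product. This would give a $+$-path from $x$ to $\partial_e\L$ in $\GI_\om$, contradicting $\s\in A^c$.
\item \emph{$T\s\in A^c$.} Define $D(T\s)$ from $T\s$ exactly as $D(\s)$ was defined from $\s$, i.e.\ as the cluster of $x$ via open edges through nonzero spins of $T\s$. Since the zero set is unchanged, $D(T\s)=D(\s)$. By point 2, $D$ has no open edge to the boundary, so no $+$-path of $T\s$ from $x$ reaches $\partial_e\L$.
\item \emph{$T$ is an involution that reverses $\s_x$.} Because $D(T\s)=D(\s)$, we have $T(T\s)=\s$. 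Moreover $(T\s)_x=-\s_x$.
\end{enumerate}

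Hence $T$ is a bijection of the uniform-weight set $A^c\cap\{\s\overset{_+}{\sim}\om\}$ onto itself that negates $\s_x$, which gives $\bar\mu^+_\om(\s_x\1_{A^c})=0$. Averaging the identity over $\phi^+_p$ then yields the theorem.

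The main obstacle is point 2 above. One must verify carefully that in $A^c$ the cluster $D$ has no open edge to the boundary, and that the cluster is defined identically before and after the flip, so that $T$ is a genuine involution. I expect the chain of compatibility constraints ($+$ spins cannot sit next to $-$ spins along open edges) to make this go through.
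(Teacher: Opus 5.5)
Your proof is correct and follows essentially the paper's route. The paper's Case II flips exactly your cluster $D(\sigma)$, which it calls $K_x(\sigma)$, to pair $E_s$ with $\Sigma_\Lambda^{\omega,-}$; for $\omega\in\Omega^s_E$ it instead flips the whole $\omega$-cluster $C_x(\omega)$. Your single involution handles both cases at once, and your point 2 (no open edge from $D$ to $\partial_e\Lambda$ when $\sigma\in A^c$) spells out a check the paper only asserts.
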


\begin{proof} Using the decomposition established in (\ref{eq:entropy_decomposition}) and (\ref{mean}), we have initially:
\[
\langle \sigma_x\rangle_{\L,\beta}^+= \sum_{\omega\in\Omega_E} \frac{1}{W^+(\omega)}\sum_{\sigma\overset{_+}{\sim}\omega}\sigma_x.
\]
\textbf{Case I:} when $\om\in \O_{E}^s$.

Consider a configuration $\omega\in \O_{E}^s$ (i.e. the cluster $C_x(\omega)$ does not reach the boundary). In this scenario, the spins belonging to $C_x(\omega)$ are decoupled from the boundary.
Let $\sigma\overset{_+}{\sim}\omega$ and consider a transformation $\sigma \to \sigma'$ that flips all spins inside $C_x(\om)$ while leaving unchanged all spins in $\L\setminus C_x(\om)$: namely,  $\sigma'_u = -\sigma_u$ if $u \in C_x(\om)$, and $\sigma'_u = \sigma_u$ otherwise. Clearly we have that  $\sigma'\overset{_+}{\sim}\omega$. By symmetry, the contributions of $\sigma$ and $\sigma'$  in the sum $\sum_{\sigma\overset{_+}{\sim}\omega}\s_x$ cancel out, yielding
\[
\frac{1}{W^+(\omega)}\sum_{\sigma\overset{_+}{\sim}\omega}\sigma_x=0.
\]
\textbf{Case II:} when $\om\in \O_{E}^c$.

Let now assume that $\omega\in \O_{E}^c$.  Since the boundary condition is fixed to $+$, in a configuration $\s\overset{_+}{\sim}\om$, any path of non-zero spins in $\GI_\om$ starting at some vertex of $\partial_e\L$ must consist entirely of $+1$ spins. We decompose the sum $\sum_{\s\overset{_+}{\sim}\om}\sigma_x$ as follows:
\[
\sum_{\s\overset{_+}{\sim}\om} \sigma_x = \sum_{\substack{\s\overset{_+}{\sim}\om\\ \sigma_x = +1}} 1 -
\sum_{\substack{\s\overset{_+}{\sim}\om \\ \sigma_x = -1}} 1.
\]
Set for brevity
$\S_\L^{\om, \pm}=\{\s\in \S_\L: \s_x=\pm 1,\s\overset{_+}{\sim}\om\}$. Observe that the set  $\S_\L^{\om,+}$ can be split into two disjoint events.
\begin{enumerate}
    \item $E_c =\{\s\in \S_\L^{\om,+}: \mbox{ a path in $C_x(\om)$ of $+$ spins connects $x$ to the boundary $\partial_e\L$} \}$.
    \item $E_s =\{\s\in \S_\L^{\omega,+}: \mbox{ no path in $C_x(\om)$ of $+$ spins containing  $x$ reaches  the boundary  $\partial_e\L$}\}.$
\end{enumerate}
Note that if $\s\in E_s$ then the vertex $x$ has spin $+1$ but is separated from the boundary by zeros within $C_x(\om)$.

Similarly, if $\s\in \S_\L^{\omega,-}$, and thus $\s_x=-1$, no path in $C_x(\om)$ of spins $-1$ containing  $x$ reaches  the boundary  $\partial_e\L$. In other words  $x$ resides in a finite component of $-$  spins   surrounded by zeros inside $C_x(\om)$ (even though $C_x(\om)$ itself touches the boundary).

Let $K_x(\sigma) \subset C_x(\om)$ be the connected component of non-zero spins containing $x$.
If $\sigma \in E_s$ or if $\sigma\in \S_\L^{\omega,-}$, then $K_x(\sigma)$ does not touch $\partial_e\L$ and is bounded by zeros inside $C_x(\om)$. By the same spin-flip symmetry argument used in Case I, there is a bijection between configurations in $E_s$ (where $K_x(\s)$ is all $+$) and configurations in $\S_\L^{\omega,-}$ (where $K_x(\s)$ is all $-$).

Consequently, their contributions to the magnetization cancel out:
\[
\sum_{\sigma \in E_s} 1 - \sum_{\sigma \in\S_\L^{\omega,-}} 1 = 0.
\]
Hence we get
\[
\sum_{\sigma \in \S_\L^{\omega,+}} \sigma_x  =
\sum_{\s\in E_c}1
= |\{\sigma \in \S_\L^{\omega,+} : x \xleftrightarrow[\omega]{+} \partial\L\}|.
\]
Dividing by $W^+(\omega)$, we conclude:
\begin{equation}
\frac{1}{W^+(\omega)}\sum_{\sigma\overset{_+}{\sim}\omega}\sigma_x~= ~\begin{cases}
\displaystyle{\frac{|\{\sigma \in  \S_\L^{\omega,+} : x \xleftrightarrow[\omega]{+} \partial \L\}|}{W^+(\omega)}} & \mbox{if $\om\in \O_{E}^c$}\\\\
0& \mbox{if $\om\in \O_{E}^s$}
\end{cases}
~= ~\bar{\mu}^+_\omega(x \xleftrightarrow[\omega]{+} \partial\L).
\end{equation}
\end{proof}


%

\subsection{Conclusion of the proof of Theorem \ref{t1}: the vanishing of the magnetization}\label{sec:Conc}

We now have all the ingredients needed to conclude the proof of Theorem \ref{t1}. The results given in \cite{LMPS} imply that
\be\label{lmps}
\bar{\mu}^+_{\omega}(x \xleftrightarrow[\omega]{+} \partial \L)\le \PP_{\GI_\om}^{1/2}(x \leftrightarrow \partial \L),
\ee
where $\PP_{\GI_\om}^{1/2}(x \leftrightarrow \partial \L)$ is the probability of the event that $x$ is connected to the boundary $\partial_i\L$
through a path of open sites inside the graph $\GI_\om$ in the Bernoulli site percolation on the graph $\GI_\om$.
Indeed, as discussed in  Section 2.2 of \cite{LMPS}, the inequality (\ref{lmps}) can be proven via a coupling of two Markov chains, one on the zero-temperature BEG model at FAD on the graph $\GI_\om$ and the other on the Bernoulli site percolation on the same graph.

Now, since clearly
$\PP_{\GI_\om}^{1/2}(x \leftrightarrow \partial \L)\le \PP_{\L}^{1/2}(x \leftrightarrow \partial \L)$
we get that
$$
\bar{\mu}^+_{\omega}(x \xleftrightarrow[\omega]{+} \partial \L)\le \PP_{\L}^{1/2}(x \leftrightarrow \partial \L)
$$
and thus
$$
\langle \sigma_x\rangle_{\L,\beta}^+= \mathbb{E}_{\phi^+_p}[\bar{\mu}^+_{\omega}(x \xleftrightarrow[\omega]{+} \partial \L)]\le \mathbb{E}_{\phi^+_p}[\PP_{\L}^{1/2}(x \leftrightarrow \partial \L)]=\PP_{\L}^{1/2}(x \leftrightarrow \partial \L)
$$
and hence
$$
0\leq\lim_{\L\uparrow\infty} \langle \sigma_x\rangle_{\L,\beta}^+\le \lim_{\L\uparrow\infty} \PP_{\L}^{1/2}(x \leftrightarrow \partial \L)=0.
$$

\section*{Conclusions}

We introduced a random-cluster type representation for the BEG model at the FAD point and used it to establish the absence of phase transition on the square lattice. We remark that our argument applies beyond the square lattice $\mathbb{Z}^2$. This result also extends to any infinite graph $G$ for which the critical parameter for Bernoulli site percolation satisfies $p_c(site,G)>1/2$, since the stochastic domination by a Bernoulli site percolation with parameter $1/2$ holds independently of the graph structure.

For $\mathbb{{Z}}^d$ with $d\geq 3$, the situation is fundamentally different. Since $p_c(site,\mathbb{Z}^d)<1/2$ for all $d\geq3$, the upper bound provided by Bernoulli site percolation at parameter $1/2$ becomes supercritical and no longer forces the magnetization to vanish. In this regime, Monte Carlo simulations \cite{LMPS} suggest that the zero-temperature magnetization at $d=3$ is strictly positive, and it is expected that a non-trivial critical inverse temperature $\b_c\in(0,\infty)$ should exist.

The non-uniqueness of Gibbs measures at zero temperature has only been proved when $d$ is sufficiently high \cite{PS}. Proving this for intermediate dimensions, however, remains a challenging open problem. A natural approach would be to establish a stochastic domination from below by a supercritical Bernoulli percolation process. However, classical local-comparison tools such as Holley's inequality are not useful in this direction at the zero-temperature setting. Another important factor is the well-known absence of monotonicity properties for hard-core models \cite{haggstrom02}. New ideas seem to be required, and a deeper understanding of the properties of the measure $\phi_p$ and the zero-temperature system in higher dimensions could be a promising direction to understand the model at positive temperatures.

\subsubsection*{Acknowledgements}
E. F. B. was partially supported by CAPES and FAPEMIG, R. S. was partially supported by CNPq, CAPES and FAPEMIG (grant RED-00133-21) and BS acknowledges the support of the Italian MIUR Department of Excellence Grant (CUP E83C23000330006). E. F. B. thanks the Department of Mathematics of Università degli Studi di Roma Tor Vergata for the hospitality and for providing a productive research environment during his research visit.

\subsubsection*{Competing Interests Statement}
The authors declare none.
\subsubsection*{Data Availability Statement}
No data was used for the research described in the article.
\subsubsection*{Funding statement}
This work was supported by CAPES and FAPEMIG (grant RED-00133-21) and Italian MIUR Department of Excellence Grant (CUP E83C23000330006). The funder had no role in study design, data collection and analysis, decision to publish, or preparation of the manuscript.

\end{document}